\definecolor{lgray}{gray}{0.92}
\definecolor{lblue}{rgb}{0.90,0.90,1.00}
\definecolor{lyellow}{rgb}{1.00,1.00,0.70}
\newenvironment{codex}{\small\verbatim}{\endverbatim\normalsize}
\newtheorem{prop}{Proposition}
\newtheorem{df}{Definition}
\newcommand{\BI}[0]{\begin{itemize}}
\newcommand{\EI}[0]{\end{itemize}}
\newcommand{\BE}[0]{\begin{enumerate}}
\newcommand{\EE}[0]{\end{enumerate}}
\newcommand{\BX}[0]{\begin{codex}}
\newcommand{\EX}[0]{\end{codex}}
\def \bscale1 {0.25}
\def \bscale {0.25}
\def \N {\mathbb{N}}
\newcommand{\FIG}[4]{
\begin{figure}[htbp]
\centering
{\includegraphics[scale=#3]{figs/#4}}
\caption{#2}
\label{#1}
\end{figure}
}
\begin{document}

\title{
  On Two Infinite Families of Pairing Bijections
}


\author{\IEEEauthorblockN{{\bf Paul Tarau}}\\
\IEEEauthorblockA{Department of Computer Science and Engineering\\
   University of North Texas\\ 
   Denton, Texas\\
   {\em Email: tarau@cs.unt.edu}\\
}}

\maketitle
\date{}

\begin{abstract}
We describe two general mechanisms
for producing pairing bijections
(bijective functions defined from $\N^2 \to \N$).
The first mechanism, using $n$-adic valuations
results in parameterized algorithms
generating a countable family of distinct pairing bijections.
The second mechanism, using characteristic functions of
subsets of $\N$ provides $2^\N$ distinct pairing bijections.
Mechanisms to combine such pairing
functions and their application to generate families of
permutations of $\N$ are also described.
The paper uses a small subset of the functional language
Haskell to provide type checked executable
specifications of all the functions defined in a
{\em literate programming} style. 
The self-contained Haskell code extracted from the paper
is available at 
\url{http://logic.cse.unt.edu/tarau/research/2012/infpair.hs} .

{\bf Keywords:} {\em
pairing / unpairing functions,
data type isomorphisms,
infinite data objects,
lazy evaluation,
functional programming.
}
\end{abstract}

\begin{codeh}
module InfPair where
import Visuals
import Data.Bits
--import Pi
\end{codeh}
\section{Introduction}

\begin{df}
A {\em pairing bijection} is a bijection $f:\N \times \N \to \N$. 
Its inverse $f^{-1}$ is called an {\em unpairing} bijection.
\end{df}

We are emphasizing here the fact
that these functions are bijections as the name {\em pairing
function} is sometime used in the literature to indicate
injective functions from $\N \times \N$ to $\N$.

Pairing bijections have been used in the first half of 19-th century by
Cauchy as a mechanism to express duble summations as
simple summations in series. They have been made famous
by their uses in the second half of the 19-th century
by Cantor's work on 
foundations of set theory. Their most
well known application is 
to show that infinite 
sets like $\N$ and $\N \times \N$ have the same cardinality.
A classic use in the theory of recursive functions is
to reduce functions on multiple arguments to single argument
functions. Reasons on why they are an interesting
object of study in terms of practical applications ranging
from multi-dimensional dynamic arrays to proximity
search using space filling curves are described in 
\cite{DBLP:conf/ipps/Rosenberg02a,lawder99,lawder:2000,faloutsos:2001}.

Like in the case of Cantor's original function $f(x,y)={1\over2}(x+y)(x+y+1)+y$,
pairing bijections have been usually hand-crafted by putting to
work geometric or arithmetic intuitions. 

While it is easy
to prove (non-constructively) that there is an uncountable
family of distinct pairing bijections, we have not seen in the
literature general mechanisms for
building families of 
pairing bijections indexed by $\N$ or $2^\N$.
It is even easier to generate (constructively) a countable family of
pairing functions simply by modifying its result of
a fixed pairing function with a reversible
operation (e.g XOR with a natural number, 
seen as the index of the family).

This paper introduces two general mechanisms for
generating such families, using $n$-adic valuations (section \ref{nadic})
and characteristic
functions of subsets of $\N$ (section \ref{char}), followed
by a discussion of related work (section \ref{rel}) and our 
conclusions (section \ref{concl}).

We will give here a glimpse of why our arguably more complex
pairing bijections are interesting.

The $n$-adic valuation based pairing functions will provide
a general mechanism for designing strongly asymmetric pairing
functions, where changes in one of the arguments have an
exponential impact on the result.

The characteristic-function mechanism, while intuitively obvious,
opens the doors, in combination with a framework providing
bijections between them and arbitrary data-types \cite{everything},
to custom-build arbitrarily intricate pairing functions associated
to for instance to ``interesting'' sequences of natural numbers
or binary expansions of \cite{intseq}
real numbers.

We will use a subset of the non-strict functional language Haskell
(seen as an equational notation for typed $\lambda$-calculus)
to provide executable definitions of mathematical functions on $\N$,
pairs in $\N \times \N$, subsets of $\N$, and sequences of natural numbers.
We mention, for the benefit of the
reader unfamiliar with
the language, that a notation like {\tt f x y} stands for $f(x,y)$,
{\tt [t]} represents sequences of type {\tt t} and a type declaration
like {\tt f :: s -> t -> u} stands for a function $f: s \times t \to u$
(modulo Haskell's ``currying'' operation, given the isomorphism between 
the function spaces ${s \times t} \to u$ and ${s \to t} \to u$). 
Our Haskell functions are always represented as sets
of recursive equations guided by pattern matching, conditional
to constraints (simple arithmetic relations following \verb~|~ and before
the \verb~=~ symbol).
Locally scoped helper functions are defined in Haskell
after the {\tt where} keyword, using the same equational style.
The composition of functions {\tt f} and {\tt g} is denoted {\tt f . g}.
It is also customary in Haskell, when defining functions in an equational style (using {\tt =})
to write $f=g$ instead of $f~x=g~x$ (``point-free'' notation).
The use of Haskell's ``call-by-need'' evaluation
allows us to work with infinite
sequences, like the {\tt [0..]} infinite list notation, corresponding to the
set $\N$ itself. 

\section{Deriving Pairing Bijections from $n$-adic valuations} \label{nadic}

We  first overview a mechanism for
deriving pairing bijections
from one-solution Diophantine equations.
Let us observe that
\begin{prop}
$\forall z \in \N^+=\N-\{0\}$ the Diophantine equation
\begin{equation}\label{dio}
2^x(2y+1)=z
\end{equation}
has exactly one solution $x,y \in \N$.
\end{prop}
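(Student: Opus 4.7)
The plan is to prove this as the classical fact that every positive integer admits a unique decomposition as a power of two times an odd number, where the exponent is the $2$-adic valuation. I would treat existence and uniqueness separately, keeping the argument purely elementary and avoiding any appeal to machinery beyond divisibility of integers.

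For existence, given $z \in \N^+$, I would define $x$ to be the largest nonnegative integer such that $2^x$ divides $z$; this $x$ exists because the set $\{k \in \N : 2^k \mid z\}$ is nonempty (contains $0$) and bounded above (by $\log_2 z$, since $2^k \leq z$ whenever $2^k \mid z$). Let $m = z / 2^x$. Then $m$ must be odd, for otherwise $2^{x+1} \mid z$, contradicting maximality of $x$. Hence $m = 2y+1$ for a unique $y \in \N$, producing a solution $(x,y)$ to equation (\ref{dio}).

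For uniqueness, suppose $(x_1, y_1)$ and $(x_2, y_2)$ both satisfy $2^{x_i}(2y_i+1) = z$. Without loss of generality $x_1 \leq x_2$, so dividing the equality $2^{x_1}(2y_1+1) = 2^{x_2}(2y_2+1)$ by $2^{x_1}$ gives $2y_1+1 = 2^{x_2 - x_1}(2y_2+1)$. The left-hand side is odd, which forces $x_2 - x_1 = 0$; hence $x_1 = x_2$, and then cancelling yields $2y_1+1 = 2y_2+1$, i.e.\ $y_1 = y_2$.

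There is essentially no obstacle here; the main thing to be careful about is the restriction $z \neq 0$, since $z = 0$ would correspond to no valid $(x,y)$ (every $2^x(2y+1)$ is strictly positive), which is exactly why the statement is restricted to $\N^+$. The slight subtlety worth flagging is that the argument uses the well-ordering / boundedness of the set of exponents dividing $z$, but this is immediate from $z$ being a fixed positive integer.
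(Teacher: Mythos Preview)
Your proof is correct. The paper's own argument is a one-line appeal to the fundamental theorem of arithmetic (``the unicity of the decomposition of a natural number as a multiset of prime factors''), from which the claim is immediate since the exponent of $2$ in the factorization of $z$ is your $x$ and the product of the remaining odd prime powers is $2y+1$. Your route is slightly different and more self-contained: you never invoke unique factorization in full, only the well-ordering of the exponents $k$ with $2^k \mid z$ and a parity comparison for uniqueness. This buys you an argument that works over any ring where ``divisibility by $2$'' behaves well, without assuming the full multiplicative structure of $\N$; the paper's version is terser but leans on a stronger theorem than is strictly needed.
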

This follows immediately from the unicity of the decomposition of a natural
number as a multiset of prime factors. 
Note that a slight modification of equation \ref{dio} results in the
pairing bijection originally introduced
in \cite{pepis,kalmar1},
seen as a mapping between the pair $(x,y)$ and $z$.
\begin{equation}\label{diopair}
2^x(2y+1)-1=z
\end{equation}

We will generalize this mechanism to obtain a family of
bijections between $\N \times \N$ and $\N^+$ (and
the corresponding pairing bijections between $\N \times \N$ and $\N$)
by choosing an arbitrary base $b$ instead of $2$.

\begin{df}
Given a number $n\in \N,~n>1$, the $n$-adic valuation of a 
natural number $m$ is the largest exponent $k$ of $n$,
such that $n^k$ divides m. It is denoted $\nu_n(m)$.
\end{df}
Note that the solution $x$ of the equation (\ref{dio}) is actually $\nu_2(z)$.
This suggest deriving similar Diophantine 
equations for an arbitrary $n$-adic valuation. 
We start by observing that the following holds:

\begin{prop}\label{qm}
$\forall b \in \N, b>1, \forall y \in \N$ 
if  $\exists q, m$ such that $b>m>0, y=bq+m$,
then
there's exactly one pair $(y', m')$, $b-1>m'\geq 0$ 
such that $y'=(b-1)q+m'$ and
the function associating $(y',m')$ to $(y,m)$ is a bijection.
\end{prop}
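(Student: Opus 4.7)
The plan is to exhibit the map explicitly rather than argue abstractly. Under the hypothesis $y = bq + m$ with $b > m > 0$, the pair $(q, m) \in \N \times \{1, 2, \ldots, b-1\}$ is forced by Euclidean division (with nonzero remainder). So the pairs $(y, m)$ that satisfy the hypothesis are naturally parametrised by $\N \times \{1, \ldots, b-1\}$, a set of size equal to $|\N| \cdot (b-1)$.

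Next I would propose the bijection itself: set $m' = m - 1$ and $y' = (b-1)q + m'$. The constraint $1 \leq m \leq b-1$ translates exactly into $0 \leq m' \leq b-2$, i.e.\ $b - 1 > m' \geq 0$, so the target pair lies in the required region. Moreover, since $0 \leq m' < b-1$, the decomposition $y' = (b-1)q + m'$ is itself a Euclidean division, so given $(y', m')$ in the target region, the value of $q$ is recovered as $q = (y' - m')/(b-1)$, and then $m = m' + 1$ and $y = bq + m$ recover the source pair. This shows both that the target $(y',m')$ attached to $(y,m)$ is unique (the uniqueness half of the statement) and that the map is invertible.

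To finish, I would check the two compositions are the identity: starting from $(y, m)$, apply the shift to get $(y', m') = ((b-1)q + m - 1,\; m - 1)$, then the inverse yields back $q$ (since $m-1 < b-1$), hence $m = (m-1)+1$ and $y = bq + m$; the reverse composition is symmetric.

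I do not anticipate any genuine obstacle: the content of the proposition is essentially that both sides are copies of $\N \times \{0, 1, \ldots, b-2\}$ (up to the trivial shift $m \mapsto m-1$), together with a ``change of base'' from $b$ to $b-1$ for the first coordinate. The only point that deserves explicit care is verifying that the bounds on $m$ and $m'$ correspond bijectively under the shift, so that the Euclidean-division interpretation is available on both sides and makes the inverse well-defined.
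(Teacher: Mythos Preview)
Your proposal is correct and follows essentially the same approach as the paper: the paper sets $y' = y - q - 1$ and $m' = m - 1$, which after substituting $y = bq + m$ gives exactly your $y' = (b-1)q + m'$. Your framing via Euclidean division on both sides is slightly more explicit about why $q$ is recoverable from $y'$, but the underlying map and its inverse are identical.
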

\begin{proof}
$y=bq+m, b>m>0$ can be rewritten as
$y-q-1=bq-q+m-1, b>m>0$, or equivalently
$y-q-1=(b-1)q+(m-1), b>m>0$ from where it
follows that setting
$y'=y-q-1$ and
$m'=m-1$
ensures the existence and unicity of y' and m' such that 
$y'=(b-1)q+m'$ and $b-1>m'>0$.
We can therefore define a function $f$ that transforms a pair $(y,m)$, 
such that $y=bq+m$ with $b>m>0$, into a pair $(y',m')$, 
such that $y'=q(b-1)+m'$ with $b-1>m' \geq 0$.
Note that the transformation works also in the opposite direction
with $y'=y-q-1$ giving $y=y'+q+1$, and with $m'=m-1$ 
giving $m=m'+1$. Therefore $f$ is a bijection.
\end{proof}

\begin{prop}
$\forall b \in \N,b>1, ~\forall z \in \N,z>0$ the 
system of Diophantine equations and inequations
\begin{equation}
b^x*(y'+q+1)=z
\end{equation}
\begin{equation}
y'=(b-1)q+m'
\end{equation}
\begin{equation}
b-1>m'\geq 0
\end{equation}
has exactly one solution $x,y' \in \N$.
\end{prop}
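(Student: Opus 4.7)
The plan is to show existence by an explicit construction based on the $b$-adic valuation and Euclidean division, and then derive uniqueness from the uniqueness of $\nu_b(z)$ combined with Proposition \ref{qm}.

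For existence, I would set $x = \nu_b(z)$, which is well defined since $z > 0$, and write $z = b^x \cdot w$ with $b \nmid w$. Because $w > 0$ is not divisible by $b$, Euclidean division of $w$ by $b$ yields a unique pair $(q,m)$ with $w = bq + m$ and $0 < m < b$. Applying Proposition \ref{qm} to $(y,m) := (w,m)$ produces a pair $(y', m')$ with $y' = (b-1)q + m'$ and $0 \leq m' < b-1$. A direct substitution
\[
y' + q + 1 \;=\; (b-1)q + m' + q + 1 \;=\; bq + (m'+1) \;=\; bq + m \;=\; w
\]
then gives $b^x(y' + q + 1) = b^x w = z$, verifying all three conditions.

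For uniqueness, I would start from an arbitrary solution $(x, y')$ and recover the auxiliary $q, m'$ as the unique Euclidean quotient and remainder of $y'$ by $b-1$. Setting $m = m' + 1$, the first equation rewrites as $z = b^x \cdot (bq + m)$ with $1 \leq m \leq b - 1$, so $b \nmid (bq + m)$. This forces $x = \nu_b(z)$, so $x$ is unique; then $bq + m = z/b^x$ is pinned down, standard Euclidean uniqueness in $\N$ fixes $q$ and $m$, and finally $y' = (b-1)q + (m-1)$ is determined as well.

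The only real subtlety, and where I would be most careful, is the observation that the inequality $b - 1 > m' \geq 0$ together with the identity $y' + q + 1 = bq + (m'+1)$ forces $m' + 1 \in \{1, \ldots, b-1\}$ and hence $b \nmid (y' + q + 1)$: without this step one could not read $x$ off as the $b$-adic valuation of $z$, and the whole uniqueness argument collapses. Everything past that hinge is bookkeeping on top of Proposition \ref{qm} and the standard uniqueness of $b$-adic valuations and quotient--remainder decompositions.
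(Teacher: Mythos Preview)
Your proposal is correct and follows essentially the same route as the paper: both arguments hinge on identifying $x$ as $\nu_b(z)$ via the observation that $y'+q+1=bq+(m'+1)$ is coprime to $b$, and then invoking Proposition~\ref{qm} to pass between $(y,m)$ and $(y',m')$. Your write-up is in fact more careful than the paper's, which compresses existence and uniqueness into a couple of sentences (and contains a sign slip, writing $m=m'-1$ where $m=m'+1$ is meant); your explicit separation of the two directions and the verification that $m'+1\in\{1,\dots,b-1\}$ forces $b\nmid(y'+q+1)$ is exactly the point the paper glosses over.
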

\begin{proof}
Let $f^{-1}$ be the inverse of the bijection $f$ defined in Proposition \ref{qm}.  Then $f^{-1}$
provides the desired unique mapping, that gives $y=y'+q+1$ and $m=m'-1$ such that $b>m>0$.
Therefore $y \equiv m~(mod~b)$ with $m>0$. And as $y$ is not divisible
with $b$, we can determine uniquely $x$ as the largest power of $b$ dividing $z$,
$x=\nu_b(z)$.
\end{proof}

We implement, for and arbitrary $b \in \N$, the Haskell code corresponding to these
bijections as the functions  {\tt nAdicCons b} and {\tt nAdicDeCons b}, 
defined between $\N \times \N$ and $N^{+}$. 
\begin{code}
nAdicCons :: N->(N,N)->N
nAdicCons b (x,y')  | b>1 = (b^x)*y where
  q = y' `div` (b-1)
  y = y'+q+1
\end{code}
\begin{code}  
nAdicDeCons :: N->N->(N,N)
nAdicDeCons b z | b>1 && z>0 = (x,y') where
  hd n = if n `mod` b > 0 then 0 else 1+hd (n `div` b)
  x = hd z
  y = z `div` (b^x)
  q = y `div` b
  y' = y-q-1 
\end{code}
Using {\tt nAdicDeCons} we define the head and tail projection functions {\tt nAdicHead} and {\tt nAdicTail}:
\begin{code}
nAdicHead, nAdicTail :: N->N->N
nAdicHead b = fst . nAdicDeCons b
nAdicTail b = snd . nAdicDeCons b
\end{code}

The following examples illustrate the operations for base {\tt 3}:
\begin{codex}
*InfPair> nAdicCons 3 (10,20)
1830519
*InfPair> nAdicHead 3 1830519
10
*InfPair> nAdicTail 3 1830519
20
\end{codex}
Note that {\tt nAdicHead n x} computes the $n$-adic valuation of x, $\nu_n(x)$ while
the tail corresponds to the ``information content'' extracted from the
remainder, after division by $\nu_n(x)$.

\begin{df}
We call
the natural number computed by {\tt nAdicHead n x} the $n$-adic head of $x \in \N^{+}$, 
by {\tt nAdicTail n x} the $n$-adic tail 
of $x \in \N^{+}$ and the natural number in $\N^+$ computed
by {\tt nAdicCons n (x,y)} the $n$-adic cons of 
$x,y \in \N$.
\end{df}

By generalizing the mechanism shown for the equations \ref{dio} and \ref{diopair}
we derive from {\tt nAdicDeCons} and {\tt nAdicCons} 
the corresponding {\em pairing} and {\em unpairing} 
bijections {\tt nAdicPair} and {\tt nAdicUnPair}:
\begin{code}
nAdicUnPair :: N->N->(N,N)
nAdicUnPair b n = nAdicDeCons b (n+1)

nAdicPair :: N->(N,N)->N
nAdicPair b xy = (nAdicCons b xy)-1
\end{code}
One can see that we obtain a countable family of bijections  
$f_b: \N \times \N \rightarrow \N$ indexed by $b \in \N$,
$b>1$.

The following examples illustrate the work of these bijections for $b=3$. Note the
use of Haskell's higher-order function ``{\tt map}'', that applies the function
{\tt nAdicUnPair 3} to a list of elements and collects the results to a list, and the special value ``{\tt it}'',
standing for the previously computed result.
\begin{codex}
*InfPair> map (nAdicUnPair 3) [0..7]
[(0,0),(0,1),(1,0),(0,2),(0,3),(1,1),(0,4),(0,5)]
*InfPair> map (nAdicPair 3) it
[0,1,2,3,4,5,6,7]
\end{codex}

\subsubsection{Deriving bijections between $\N$ and $[\N]$}
For each base {\tt b>1}, we can also obtain a pair of bijections 
between natural numbers and lists of natural
numbers in terms of {\tt nAdicHead}, {\tt nAdicTail} and {\tt nAdicCons}:
\begin{code}
nat2nats :: N->N->[N]
nat2nats _ 0 = []
nat2nats b n | n>0 = 
   nAdicHead b n : nat2nats b (nAdicTail b n)
\end{code}
\begin{code}
nats2nat :: N->[N]->N
nats2nat _ [] = 0
nats2nat b (x:xs) = nAdicCons b (x,nats2nat b xs)
\end{code}

The following example illustrate how they work:
\begin{codex}
*InfPair> nat2nats 3 2012
[0,2,2,0,0,0,0]
*InfPair> nats2nat 3 it
2012
\end{codex}

Using the framework introduced in \cite{calc09fiso,everything} 
and summarized in the {\bf Appendix}, 
we can ``reify'' these bijections as {\tt Encoders} between natural numbers and sequences
of natural numbers (parameterized by the first argument of {\tt nAdicHead} and
{\tt nAdicTail}).
Such Encoders can now be ``morphed'', by using the bijections provided
by the framework, into various data types sharing the same
``information content'' (e.g. lists, sets, multisets).
\begin{code}
nAdicNat :: N->Encoder N
nAdicNat k = Iso (nat2nats k) (nats2nat k)
\end{code}

In particular, for $k=2$, we obtain the {\tt Encoder} corresponding to
the Diophantine equation (\ref{dio}) 
\begin{code}
nat :: Encoder N
nat = nAdicNat 2
\end{code}
The following examples illustrate these operations,
lifted through the framework defining 
bijections between datatypes, given in {\bf Appendix}.
\begin{codex}
*InfPair> as (nAdicNat 3) list [2,0,1,2]
873
*InfPair> as (nAdicNat 7) list [2,0,1,2]
27146
*InfPair> as nat list [2,0,1,2]
300
*InfPair> as list nat it
[2,0,1,2]
\end{codex}

\subsubsection{Deriving new families of Encoders and Permutations of $\N$}

For each $l,k \in \N$ one can
generate a family of permutations
(bijections $f:\N\rightarrow \N$), parameterized by the pair {\tt (l,k)}, by composing
{\tt nat2nats l} and {\tt nats2nat k}.
\begin{code}
nAdicBij :: N -> N -> N -> N
nAdicBij k l = (nats2nat l) . (nat2nats k) 
\end{code}
The following example illustrates their work on the initial segment {\tt [0..31]} of $\N$:
\begin{codex}
*InfPair> map (nAdicBij 2 3) [0..31]
[0,1,3,2,9,5,6,4,27,14,15,8,18,10,12,7,81,41,42,
 22,45,23,24,13,54,28,30,16,36,19,21,11]
*InfPair> map (nAdicBij 3 2) [0..31]
[0,1,3,2,7,5,6,15,11,4,13,31,14,23,9,10,27,63,
 12,29,47,30,19,21,22,55,127,8,25,59,26,95]
\end{codex}
It is easy to see that the following holds:
\begin{prop}
\begin{equation}
(\mathit{nAdicBij}~k~l) \circ (\mathit{nAdicBij}~l~k) \equiv \mathit{id}
\end{equation}
\end{prop}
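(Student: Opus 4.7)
The plan is to unfold the composition and reduce the claim to a pair of inverseness identities about $\mathit{nat2nats}$ and $\mathit{nats2nat}$. By the definition of $\mathit{nAdicBij}$,
\[
((\mathit{nAdicBij}~k~l) \circ (\mathit{nAdicBij}~l~k))~n = \mathit{nats2nat}~l~(\mathit{nat2nats}~k~(\mathit{nats2nat}~k~(\mathit{nat2nats}~l~n))),
\]
so it suffices to establish, for every base $b>1$, the two identities $\mathit{nat2nats}~b \circ \mathit{nats2nat}~b = \mathit{id}_{[\N]}$ and $\mathit{nats2nat}~b \circ \mathit{nat2nats}~b = \mathit{id}_{\N}$. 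Applying these with $b = k$ in the middle of the chain and $b = l$ on the outside collapses the composition to $\mathit{id}$.

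For the first identity I would proceed by structural induction on the input list. The empty case is immediate from the defining equations. In the step $xs = x : xs'$, set $z = \mathit{nAdicCons}~b~(x, \mathit{nats2nat}~b~xs')$; since $\mathit{nAdicCons}~b$ always produces a value $b^x y$ with $y \geq 1$, $z$ lies in $\N^+$, the $n > 0$ guard of $\mathit{nat2nats}$ fires, and the previously established inverseness of $\mathit{nAdicCons}~b$ and $\mathit{nAdicDeCons}~b$ yields $(\mathit{nAdicHead}~b~z, \mathit{nAdicTail}~b~z) = (x, \mathit{nats2nat}~b~xs')$. Invoking the inductive hypothesis on $xs'$ closes the step. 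For the second identity I would use well-founded induction on $n$: the base case $n = 0$ is direct, and for $n > 0$ the same inverseness gives $n = \mathit{nAdicCons}~b~(\mathit{nAdicHead}~b~n, \mathit{nAdicTail}~b~n)$, so applying the inductive hypothesis to $\mathit{nAdicTail}~b~n$ and folding through the definition of $\mathit{nats2nat}$ on a cons finishes the case.

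The main obstacle I anticipate is the strict-decrease justification $\mathit{nAdicTail}~b~n < n$ for $n > 0$ that underwrites the well-founded induction. This requires tracing $\mathit{nAdicDeCons}$: with $x = \nu_b(n)$, $y = n / b^x$, $q = y~\mathrm{div}~b$, and $y' = y - q - 1$, one must use the fact that $b \nmid y$ (by maximality of $x$) to get both $y' \geq 0$ and $y' \leq y - 1 < n$. Beyond this bookkeeping, the argument is routine unfolding supported by the inverseness already proved for $\mathit{nAdicCons}~b$ and $\mathit{nAdicDeCons}~b$.
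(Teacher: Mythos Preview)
Your argument is correct. The paper does not actually give a proof of this proposition: it only prefaces the statement with ``It is easy to see that the following holds'' and moves on. Your approach---unfolding the definition of $\mathit{nAdicBij}$ and reducing the claim to the mutual inverseness of $\mathit{nat2nats}~b$ and $\mathit{nats2nat}~b$, which in turn follows by induction from the inverseness of $\mathit{nAdicCons}~b$ and $\mathit{nAdicDeCons}~b$ already established in Propositions~2 and~3---is precisely the routine verification the paper's phrasing has in mind, so there is nothing to compare against beyond noting that you have supplied the details the paper omits. The termination check via $\mathit{nAdicTail}~b~n < n$ is handled correctly in your sketch.
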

As a side note, such bijections might have applications to cryptography, 
provided that a
method is devised to generate ``interesting'' pairs {\tt (k,l)}
defining the encoding.

We can derive {\tt Encoders} representing functions
between $\N$ and sequences
of natural numbers, parameterized by a (possibly infinite)
list of {\tt nAdicHead / nAdicTail} bases, by repeatedly applying
the $n$-adic head, tail and cons operation parameterized
by the (assumed infinite) sequence {\tt ks}:
\begin{code}
nAdicNats :: [N]->Encoder N
nAdicNats ks = Iso (nat2nAdicNats ks) (nAdicNats2nat ks)

nat2nAdicNats :: [N]->N->[N]
nat2nAdicNats _ 0 = []
nat2nAdicNats (k:ks) n | n>0 = 
  nAdicHead k n : nat2nAdicNats ks (nAdicTail k n)

nAdicNats2nat :: [N]->[N]->N
nAdicNats2nat _ [] = 0
nAdicNats2nat (k:ks) (x:xs) = 
  nAdicCons k (x,nAdicNats2nat ks xs)
\end{code}
For instance, the Encoder {\tt nat'} corresponds to {\tt ks}
defined as the infinite sequence
starting at {\tt 2}.
\begin{code}
nat' :: Encoder N
nat' = nAdicNats [2..]
\end{code}
The following examples illustrate the mechanism:
\begin{codex}
*InfPair> as nat' list [2,0,1,2]
1644
*InfPair> as list nat' it
[2,0,1,2]
*InfPair> map (as nat' nat) [0..15]
[0,1,2,3,4,7,6,5,8,19,14,15,12,13,10,9]
*InfPair> map (as nat' nat) [0..15]
[0,1,2,3,4,7,6,5,8,19,14,15,12,13,10,9]
\end{codex}
Note that
functions like {\tt as nat' nat} illustrate another general mechanism for defining permutations of 
$\N$.

\section{Pairing bijections derived from 
characteristic functions of subsets of $\N$} \label{char}

We start by connecting the bitstring representation of characteristic functions to our 
bijective data transformation framework (overviewed in the {\bf Appendix}).

\subsection{The bijection between lists and characteristic functions of sets}

The function {\tt list2bins} converts a sequence of natural numbers 
into a characteristic function of a subset of $\N$ represented as a string of binary
digits. The algorithm interprets
each element of the list as the number of {\tt 0} 
digits before the next {\tt 1} digit.
Note that infinite sequences are handled as well, resulting
in infinite bitstrings.
\begin{code}
list2bins :: [N]->[N]

list2bins [] = [0]
list2bins ns = f ns where
  f [] = []
  f (x:xs) = (repl x 0) ++ (1:f xs) where
    repl n a | n <= 0 = []
    repl n a = a:repl (pred n) a
\end{code}
The function {\tt bin2list} converts a characteristic function represented
as bitstrings back to a list of natural numbers.
\begin{code}
bins2list :: [N] -> [N]
bins2list xs = f xs 0 where
  f [] _ = []
  f (0:xs) k = f xs (k+1)
  f (1:xs) k = k : f xs 0
\end{code}
Together they provide the Encoder {\tt bins}, 
that we will use to connect characteristic functions
to various data types.
\begin{code} 
bins :: Encoder [N]
bins = Iso bins2list list2bins
\end{code}
The following examples (where the Haskell library function {\tt take} 
is used to restrict execution to an initial segment of an infinite list) 
illustrate their use:
\begin{codex}
*InfPair> list2bins [2,0,1,2]
[0,0,1,1,0,1,0,0,1]
*InfPair> bins2list it
[2,0,1,2]

*InfPair> take 20 (list2bins [0,2..])
[1,0,0,1,0,0,0,0,1,0,0,0,0,0,0,1,0,0,0,0]
*InfPair> bins2list it
[0,2,4,6]
\end{codex}
The following holds:
\begin{prop}
If $M$ is a subset of $\N$, the bijection
{\tt as bins set} returns the bitstring
associated to $M$ and its inverse is the bijection
{\tt as set bins}.
\end{prop}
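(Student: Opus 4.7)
The plan is to split the claim into two independent pieces: first, that \texttt{list2bins} and \texttt{bins2list} are mutually inverse on the appropriate domain, and second, that the resulting bitstring is in fact the characteristic function of $M$ under the encoding used by \texttt{set}.

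First I would establish the inverse property of the pair $(\texttt{list2bins}, \texttt{bins2list})$ by induction on the input. For the direction $\texttt{bins2list} \circ \texttt{list2bins} = \mathit{id}$ on finite lists, the base case $[\,]$ is handled explicitly ($\texttt{list2bins}\,[\,] = [0]$ and $\texttt{bins2list}\,[0] = [\,]$), and for a list $x{:}xs$ the helper $\texttt{f}$ emits $x$ zeros followed by a $1$, whose effect is exactly inverted by the zero-counting loop in $\texttt{bins2list}$ which emits $x$ and resets its counter; induction on $xs$ closes the step. The dual direction uses induction on the number of $1$'s in the bitstring: each maximal run $0^{k_i}1$ is consumed by \texttt{bins2list} into $k_i$ and then reproduced verbatim by \texttt{list2bins}. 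In the non-strict (infinite) setting, co-induction gives the same statement pointwise on finite prefixes, which is all that is needed to talk about the characteristic function.

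Next I would make the set-theoretic interpretation explicit. If the list produced by \texttt{bins2list} on a bitstring $b$ is $[k_0, k_1, \ldots]$, then by direct unfolding the positions of the $1$ bits in $b$ are
\begin{equation}
p_j \;=\; \sum_{i=0}^{j} k_i + j, \qquad j \ge 0,
\end{equation}
so the subset of $\N$ whose characteristic function is $b$ is exactly $M = \{p_j : j \ge 0\}$. Conversely, given a set $M = \{m_0 < m_1 < \cdots\}$, the ``gap list'' $k_0 = m_0$ and $k_j = m_j - m_{j-1} - 1$ for $j \ge 1$ recovers the same $p_j$, which is precisely the list returned by the \texttt{set} encoder. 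Hence $\texttt{as bins set}\,M$ composes the gap list with \texttt{list2bins} and produces the bitstring with $1$'s at the positions of $M$, i.e., the characteristic function $\chi_M$.

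Finally I would note that bijectivity of \texttt{as bins set} follows from composing the two bijections already established: the framework-level bijection $M \leftrightarrow \text{gap list}$ provided by \texttt{set}, and the list/bitstring bijection proved in the first step. The inverse is therefore $\texttt{as set bins}$, as claimed. The only delicate point, and the one I expect to need the most care, is the boundary behaviour for the empty set and for sets that are cofinite or infinite, where one must check that the ``trailing'' convention (the extra $0$ in $\texttt{list2bins}\,[\,]$, and the lazy production of bits for infinite gap lists) is consistent on both sides; the computation above shows it is, but it is the step where a careless induction would miss a case.
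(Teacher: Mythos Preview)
Your proof is correct and considerably more thorough than the paper's. The paper dispatches the proposition in a single sentence: it simply notes that \texttt{as bins set} is, by construction, the composition of the bijection between sets and lists (given by the \texttt{set} encoder) with the bijection between lists and bitstrings (given by \texttt{list2bins}/\texttt{bins2list}), and therefore is itself a bijection with inverse \texttt{as set bins}. The paper does not verify that \texttt{list2bins} and \texttt{bins2list} are mutual inverses, nor does it check that the produced bitstring really is the characteristic function of~$M$; these facts are taken as evident from the informal description of the algorithms.

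Your argument fills in exactly those gaps: the inductive verification that \texttt{bins2list} inverts \texttt{list2bins}, the explicit formula $p_j=\sum_{i\le j}k_i+j$ for the positions of the $1$'s, and the matching computation of the gap list $k_0=m_0$, $k_j=m_j-m_{j-1}-1$ produced by \texttt{set2list}. This buys you an actual proof that the output is $\chi_M$ rather than merely ``a'' bitstring bijectively associated to~$M$, which is a stronger reading of the statement than the paper's own argument supports. The trade-off is length: the paper's compositional one-liner is what one would expect at this level of abstraction, whereas your version would be appropriate if the reader had any doubt about the component bijections.
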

\begin{proof}
Observe that the transformations are the composition
of bijections between bitstrings and lists and 
bijections between lists and sets.
\end{proof}
The following example illustrates this correspondence:
\begin{codex}
*InfPair> as bins set [0,2,4,5,7,8,9]
[1,0,1,0,1,1,0,1,1,1]
*InfPair> as set bins it
[0,2,4,5,7,8,9]
\end{codex}
Note that, for convenient use on finite sets, the functions
do not add the infinite stream of {\tt 0} digits indicating
its infinite stream of non-members, but we will add it as needed
when the semantics of the code requires it for
representing accurately operations on infinite sequences.
We will use the same convention through the paper.

\subsection{Splitting and merging bitstrings with a characteristic function}
Guided by the characteristic function of a subset of $\N$, represented as
a bitstring, the function {\tt bsplit} separates a (possibly infinite)
sequence of numbers into two lists: members and non-members.
\begin{code}
bsplit :: [N] -> [N] -> ([N], [N])
bsplit _ [] = ([],[])
bsplit [] (n:ns) = 
  error ("bspilt provides no guidance at: "++(show n))
bsplit (0:bs) (n:ns) = (xs,n:ys) where 
  (xs,ys) = bsplit bs ns 
bsplit (1:bs) (n:ns) = (n:xs,ys) where 
  (xs,ys) = bsplit bs ns 
\end{code}

Guided by the characteristic function of a subset of $\N$, represented as
a bitstring, the function {\tt bmerge} merges two lists of natural numbers
into one, by interpreting each {\tt 1} in the characteristic function
as a request to extract an element of the first list and each {\tt 0}
as a request to extract an element of the second list.  
\begin{code}
bmerge :: [N] -> ([N], [N]) -> [N]
bmerge _ ([],[]) = []
bmerge bs ([],[y]) = [y]
bmerge bs ([x],[]) = [x]
bmerge bs ([],ys) = bmerge bs ([0],ys)
bmerge bs (xs,[]) = bmerge bs (xs,[0])
bmerge (0:bs) (xs,y:ys) = y : bmerge bs (xs,ys)
bmerge (1:bs) (x:xs,ys) = x : bmerge bs (xs,ys)
\end{code}
The following examples (trimmed to finite lists) illustrate their use:
\begin{codex}
*InfPair> bsplit [0,1,0,1,0,1] [10,20,30,40,50,60]
([20,40,60],[10,30,50])
*InfPair> bmerge [0,1,0,1,0,1] it
[10,20,30,40,50,60]
\end{codex}

\subsection{Defining pairing bijections, generically}
We design a generic mechanism to
derive pairing functions by combining
the data type transformation operation {\tt as}
with the {\tt bsplit} and {\tt bmerge} functions
that apply a characteristic function encoded as a
list of bits.

\begin{code}
genericUnpair :: Encoder t -> t ->   N -> (N, N)
genericUnpair xEncoder xs n = (l,r) where 
  bs = as bins xEncoder xs
  ns = as bins nat n
  (ls,rs) = bsplit bs ns
  l = as nat bins ls
  r = as nat bins rs
\end{code}

\begin{code}
genericPair :: Encoder t -> t ->   (N, N) -> N
genericPair xEncoder xs (l,r) = n where
  bs = as bins xEncoder xs
  ls = as bins nat l
  rs = as bins nat r
  ns = bmerge bs (ls,rs)
  n = as nat bins ns
\end{code}

Let us observe first that for termination of
this functions depends on termination of 
the calls to {\tt bsplit} and {\tt bmerge},
as illustrated by the following examples:
\begin{codex}
*InfPair> genericPair bins (cycle [0]) (10,20)
^CInterrupted.
*InfPair> genericUnpair bins (cycle [1]) 42
(^CInterrupted.
\end{codex}
In this case, the characteristic functions
given by {\tt cycle [0]} or {\tt cycle [1]}
would trigger an infinite search for a non-existing
first {\tt 1} or {\tt 0} in {\tt bsplit} and
{\tt bmerge}.

Clearly, this suggests restrictions
on the acceptable characteristic functions.

We will now give sufficient conditions ensuring that
the functions {\tt genericUnpair} and {\tt genericPair}
terminate for any values of their last arguments.
Such restrictions, will enable them to define families of pairing
functions parameterized
by characteristic functions derived from
various data types.

\begin{df}
We call {\em bloc} of digits occurring in a characteristic
function any (finite or infinite) contiguous sequence of digits.
\end{df}
Note that an infinite bloc made entirely of $0$ (or $1$) digits can only
occur at the end of the sequence defining the characteristic function, 
i.e. only if it exists a number
$n$ such that the index of each member of the bloc 
is larger than $n$.

\begin{prop} \label{infset}
If $\{a_n\}_{n \in \N}$ is an infinite sequence of bits
containing only finite blocks of {\tt 0} and {\tt 1} digits,
{\tt genericPair bins} and {\tt genericUnPair bins}
define a family of pairing bijections
parameterized by $\{a_n\}_{n \in \N}$.
\end{prop}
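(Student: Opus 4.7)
The plan is to reduce the statement to two simpler claims: (i) both {\tt bsplit bs} and {\tt bmerge bs} terminate on every finite input whenever the characteristic bitstring {\tt bs} contains only finite blocs; and (ii) {\tt bsplit bs} and {\tt bmerge bs} are mutually inverse operations between finite bitstrings and pairs of finite bitstrings of compatible shape. Given these two facts, {\tt genericPair bins} and {\tt genericUnpair bins} are just the conjugation of {\tt bsplit}/{\tt bmerge} by the already-established bijection {\tt as bins nat} between $\N$ and finite bitstrings, so they are themselves mutual bijections $\N \times \N \leftrightarrow \N$.

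First I would note that the hypothesis ``only finite blocs'' is equivalent to both $0$ and $1$ occurring infinitely often in $\{a_n\}_{n\in\N}$, since an infinite tail consisting of a single digit would itself constitute an infinite bloc. This reformulation is what actually drives the termination argument. With it in hand, termination of {\tt bsplit bs ns} is immediate: the input {\tt ns} coming from {\tt as bins nat n} is finite, and each recursive call of {\tt bsplit} consumes exactly one element of {\tt ns}, so the recursion halts after $|{\tt ns}|$ steps, producing $(ls,rs)$ with $|ls|+|rs|=|{\tt ns}|$.

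Termination of {\tt bmerge bs (ls,rs)} is the only technically delicate point. Its defining equations either strictly decrease $|ls|+|rs|$ (when the head of {\tt bs} matches the side that is currently non-empty) or invoke a padding clause that replaces an empty list by the singleton {\tt [0]}. What must be argued is that an infinite alternation between ``pad'' and ``consume the padding'' cannot occur; this is exactly where the finiteness of the $0$- and $1$-blocs of {\tt bs} is used, since the next opposite-valued digit is always reached in a bounded number of steps, at which point the partner list shortens. A lexicographic measure on the pair (total remaining length of the two lists, distance along {\tt bs} to the next bit forcing progress) makes this rigorous.

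Once termination is secured, I would verify bijectivity by a routine induction on the length of {\tt ns} (respectively on $|ls|+|rs|$), checking that the pattern-matching clauses of {\tt bsplit} and {\tt bmerge} perform exactly opposite operations for each leading bit of {\tt bs}. The hard part of the whole argument, I expect, is not the termination per se but pinning down the correct notion of ``compatible shape'' for {\tt ls} and {\tt rs}, in view of the padding clauses and of the convention announced earlier in the paper that the infinite tail of $0$s representing non-members of a finite set is suppressed unless needed. If this bookkeeping is left informal, two distinct pairs $(l,r)$ could in principle be produced from the same $n$ via different padding choices; making it precise is what the real content of the proof amounts to.
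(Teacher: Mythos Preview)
Your approach is essentially the paper's: the paper's entire proof is the single sentence ``Having an alternation of finite blocks of $1$s and $0$s ensures that, when called from {\tt genericPair} and {\tt genericUnPair}, the functions {\tt bmerge} and {\tt bsplit} terminate.'' You have reconstructed this termination argument in considerably more detail than the paper supplies, and you go further by also sketching the bijectivity verification and flagging the padding/trailing-zero bookkeeping, neither of which the paper addresses at all. So your plan is correct, follows the same route, and is in fact more complete than what the paper offers; the concerns you raise about making the ``compatible shape'' convention precise are legitimate, but the paper simply leaves them implicit.
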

\begin{proof}
Having an alternation of finite blocks of $1$s and $0$s, 
ensures that, when called from {\tt genericPair}
and {\tt genericUnPair}, 
the functions {\tt bmerge} and {\tt bsplit}
terminate.
\end{proof}

For instance, {\em Morton} codes \cite{lawder:2000} are
derived by using a stream of alternating 
{\tt 1} and {\tt 0} digits (provided by the Haskell library function {\tt cycle})
\begin{code}
bunpair2 = genericUnpair bins (cycle [1,0])
bpair2 = genericPair bins (cycle [1,0])
\end{code}
and working as follows:
\begin{codex}
*InfPair> map bunpair2 [0..10]
[(0,0),(1,0),(0,1),(1,1),(2,0),
 (3,0),(2,1),(3,1),(0,2),(1,2),(0,3)]
*InfPair> map bpair2 it
[0,1,2,3,4,5,6,7,8,9,10]
\end{codex}

\begin{prop} \label{infset}
If $\{a_n\}_{n \in \N}$ is an infinite sequence of non-decreasing
natural numbers, the functions {\tt genericPair set} and {\tt genericUnPair set}
define a family of pairing bijections
parameterized by $\{a_n\}_{n \in \N}$.
\end{prop}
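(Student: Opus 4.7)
The plan is to reduce the statement to the immediately preceding proposition about {\tt genericPair bins} and {\tt genericUnpair bins}. A quick inspection of the definitions of {\tt genericPair} and {\tt genericUnpair} shows that the encoder argument {\tt xEncoder} only enters through the single expression {\tt bs = as bins xEncoder xs}; the rest of both computations is driven entirely by the resulting bitstring {\tt bs}. Consequently, the behaviour of {\tt genericPair set} (resp. {\tt genericUnpair set}) on parameter $\{a_n\}$ coincides, as a pairing/unpairing function of $\N \times \N$ to $\N$, with that of {\tt genericPair bins} (resp. {\tt genericUnpair bins}) on the bitstring {\tt as bins set} $\{a_n\}$.

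Next, I would spell out what this bitstring looks like. Interpreting the hypothesis ``non-decreasing'' as strictly increasing (which is forced by the {\tt set} encoder collapsing duplicates), the conversion {\tt as bins set} applied to the infinite strictly increasing sequence $\{a_n\}$ produces the characteristic function $\chi$ of the set $\{a_n : n\in\N\} \subseteq \N$, with a {\tt 1} in position $a_n$ and a {\tt 0} elsewhere, exactly as established in the correspondence between {\tt bins} and {\tt set} in the previous subsection.

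Then I would verify that $\chi$ contains only finite blocks of {\tt 0}s and of {\tt 1}s, so that Proposition~\ref{infset} applies. A maximal block of {\tt 0}s lying between consecutive set members $a_n$ and $a_{n+1}$ has length $a_{n+1}-a_n-1$, a finite natural number by hypothesis. A maximal block of {\tt 1}s corresponds to a run of consecutive integers all lying in $\{a_n\}$, and its length is finite provided that the complement of $\{a_n\}$ in $\N$ is also infinite. Under that condition (the expected, non-degenerate reading of the hypothesis), all blocks of $\chi$ are finite, so by the preceding proposition {\tt genericPair bins} and {\tt genericUnpair bins} instantiated at $\chi$ form a pairing bijection, and by the reduction above so do {\tt genericPair set} and {\tt genericUnpair set} instantiated at $\{a_n\}$.

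The only real obstacle is the edge case in which $\{a_n\}$ eventually contains a whole final segment $\{k,k+1,k+2,\ldots\}$ of $\N$: then $\chi$ ends in an infinite block of {\tt 1}s and {\tt bmerge} may loop, exactly as in the degenerate {\tt cycle [1]} example shown earlier. I expect the cleanest write-up to either add explicitly to the hypothesis that $\N\setminus\{a_n\}$ is infinite, or to observe that the set encoder paired with an unbounded but co-infinite sequence is precisely what makes the argument go through; after that clarification the proof is essentially an application of the previous proposition together with the code-level reduction to the {\tt bins}-parameterized version.
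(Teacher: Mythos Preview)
Your approach is essentially the same as the paper's: reduce to the preceding proposition on \texttt{bins} by arguing that the characteristic function obtained via \texttt{as bins set} from the given sequence has only finite blocks of \texttt{0}s and \texttt{1}s. The paper's proof simply asserts that a non-decreasing sequence ``represents canonically an infinite set such that its complement is also infinite'' and concludes; you are in fact more careful than the paper in flagging the degenerate co-finite case, which the paper's one-line argument glosses over.
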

\begin{proof}
Given that the sequence is non-decreasing, it represents canonically an infinite
set such that its complement is also infinite, represented as
a non-decreasing sequence. Therefore, the associated characteristic
function will have an alternation of finite blocks of {\tt 1} and {\tt 0} digits, 
inducing a pairing/unpairing bijection.
\end{proof}

The bijection {\tt bpair k} and its inverse {\tt bunpair k} are derived
from a {\tt set} representation (implicitly morphed into a characteristic
function).
\begin{code}
bpair k = genericPair set [0,k..]
bunpair k = genericUnpair set [0,k..] 
\end{code}
Note that for {\tt k = 2} we obtain exactly the bijections {\tt bpair2} and
{\tt bunpair2} derived previously, as illustrated by the following example:
\begin{codex}
*InfPair> map (bunpair 2) [0..10]
[(0,0),(1,0),(0,1),(1,1),(2,0),(3,0),
 (2,1),(3,1),(0,2),(1,2),(0,3)]
*InfPair> map (bpair 2) it
[0,1,2,3,4,5,6,7,8,9,10]
\end{codex}
We conclude with a similar result for lists:
\begin{prop}
If $\{a_n\}_{n \in \N}$ is an infinite sequence of 
natural numbers only containing finite blocks of 0s,
the functions {\tt genericPair list} and {\tt genericUnPair list}
define a family of pairing bijections
parameterized by $\{a_n\}_{n \in \N}$.
\end{prop}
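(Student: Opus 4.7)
The plan is to reduce the statement directly to the earlier Proposition~\ref{infset} about characteristic functions with only finite blocks of \texttt{0}s and \texttt{1}s, by analyzing what bitstring the \texttt{list} Encoder produces from $\{a_n\}_{n \in \N}$.

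First, I would unfold the definitions. By construction, \texttt{genericPair list} and \texttt{genericUnpair list} differ from \texttt{genericPair bins} and \texttt{genericUnpair bins} only in that the input is first passed through \texttt{as bins list}, which on a list is just \texttt{list2bins}. Hence it suffices to show that the bitstring $B := \texttt{list2bins}(\{a_n\})$ satisfies the hypothesis of Proposition~\ref{infset}: it is an infinite bitstring in which every maximal block of identical digits is finite.

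Second, I would use the explicit shape of \texttt{list2bins}. By inspection of its definition, $B$ is the infinite concatenation
\begin{equation}
0^{a_0}\,1\,0^{a_1}\,1\,0^{a_2}\,1\,\cdots
\end{equation}
From this I read off three facts. (i) $B$ is infinite, since the input sequence is infinite and each element contributes at least one digit (the trailing \texttt{1}). (ii) Every maximal block of \texttt{0}s in $B$ is contained in one of the groups $0^{a_n}$, whose length $a_n$ is a natural number, hence finite. (iii) A maximal block of consecutive \texttt{1}s in $B$ arises exactly from a maximal run of indices $n$ on which $a_n = 0$; by the hypothesis that $\{a_n\}$ contains only finite blocks of \texttt{0}s, every such run is finite, so every block of \texttt{1}s in $B$ is finite.

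Third, applying Proposition~\ref{infset} to $B$ yields that \texttt{genericPair bins} and \texttt{genericUnpair bins}, instantiated with $B$, form a pairing bijection pair; pulling back through the bijection \texttt{list2bins}/\texttt{bins2list} gives the claim for \texttt{genericPair list} and \texttt{genericUnpair list}. The only mildly delicate step is (iii), since one has to be careful that runs of \texttt{1}s in the output really correspond to runs of \texttt{0}s in the input list rather than, say, to interleavings with the separator \texttt{1}s; but a direct inspection of the recursion in \texttt{list2bins} makes this transparent, so no substantial obstacle arises beyond checking these correspondences.
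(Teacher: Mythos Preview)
Your argument is correct. You reduce directly to the earlier proposition about \texttt{genericPair bins} by explicitly analyzing the bitstring $0^{a_0}1\,0^{a_1}1\,0^{a_2}1\cdots$ produced by \texttt{list2bins} and checking that both kinds of maximal blocks are finite. The paper instead reduces to the intermediate proposition about \texttt{genericPair set}, observing that \texttt{list2set} sends $\{a_n\}$ to the strictly increasing sequence $a_0,\,a_0+a_1+1,\,a_0+a_1+a_2+2,\ldots$, which then (by that proposition's proof) yields a characteristic function with only finite blocks. The two routes meet at the same underlying fact about the bitstring; yours is one step shorter and makes the block structure fully explicit, while the paper's version is terser and exploits the chain of Encoders already set up. Neither approach uncovers anything the other misses.
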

\begin{proof}
It follows from Prop. \ref{infset} by observing that such sequences
are transformed into infinite sets represented as non-decreasing sequences.
\end{proof}
The {\bf Appendix} discusses a few more examples of such pairing functions
and visualizes a few space-filling curves associated to them.

\begin{prop}
There are $2^\N$ pairing functions defined using characteristic functions of sets 
of $\N$.
\end{prop}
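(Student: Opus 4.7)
The plan is to show (i) the set of admissible characteristic functions that induce a pairing bijection by the mechanism of Proposition \ref{infset} already has cardinality $2^{\aleph_0}$, and (ii) the map from characteristic function to pairing bijection is injective, so the resulting family of bijections itself has cardinality $2^{\aleph_0}$.

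For step (i), a characteristic function is a sequence $\{a_n\}_{n\in\N} \in \{0,1\}^\N$. The admissibility hypothesis of Proposition \ref{infset} is precisely that every block of equal digits is finite, i.e.\ both $\{n : a_n=0\}$ and $\{n : a_n=1\}$ are infinite. The only sequences excluded are those that are eventually constant, and such sequences are determined by a finite prefix together with a single tail bit, so they form a countable set. Since $|\{0,1\}^\N|=2^{\aleph_0}$, subtracting a countable set leaves $2^{\aleph_0}$ admissible characteristic functions.

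For step (ii), suppose $bs \neq bs'$ are two admissible characteristic functions and let $k$ be the least index where they disagree; say $bs[k]=1$ and $bs'[k]=0$. Apply the unpairing bijection to $n=2^k$, whose bitstring has a single $1$-digit at position $k$. Examining {\tt bsplit}, this lone $1$-digit is routed entirely to the left component when the guiding bit is $1$ and entirely to the right component when the guiding bit is $0$. Hence {\tt genericUnpair bins bs $(2^k)$} yields a pair of the form $(u,0)$ with $u>0$, whereas {\tt genericUnpair bins bs' $(2^k)$} yields a pair of the form $(0,v)$ with $v>0$. The two unpairing bijections therefore differ at $n=2^k$, and consequently the two pairing bijections differ as well. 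This establishes injectivity of the parameterization, so distinct admissible sequences give rise to distinct pairing bijections.

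Combining (i) and (ii) yields at least $2^{\aleph_0}$ pairing bijections constructed by the mechanism, and the obvious upper bound $2^{\aleph_0}$ comes from the fact that the parameter space is contained in $\{0,1\}^\N$. I expect the main delicate point to be the injectivity argument, specifically verifying that the bit-level behaviour of {\tt bsplit} on a single-bit input really is as claimed under the conventions of {\tt bins}; once that is pinned down, the cardinality counting is routine.
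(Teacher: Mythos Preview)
Your step (i) is exactly the paper's proof: the paper observes that the only inadmissible characteristic functions are those ending in an infinite constant bloc, notes that such sequences correspond bijectively to finite subsets of $\N$ (hence are countable), and concludes that the remaining admissible ones have cardinality $2^{\aleph_0}$. That is all the paper does.

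Your step (ii), the injectivity argument, is genuinely additional and not present in the paper. Strictly speaking the proposition speaks of the number of \emph{pairing functions}, not the number of parameters, so your reading is the natural one and your proof is more complete. The paper simply counts admissible characteristic functions and tacitly identifies that count with the count of resulting bijections. Your separating example $n=2^k$ is correct under the paper's encodings: {\tt as bins nat $(2^k)$} is indeed the bitstring $0^k1$, and if $bs$ and $bs'$ first differ at position $k$ with $bs[k]=1$, $bs'[k]=0$, then writing $p$ for the number of $1$'s among $bs[0..k-1]$ and $q=k-p$, one gets $(2^p,0)$ from $bs$ and $(0,2^q)$ from $bs'$, which are distinct. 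So your concern about the bit-level behaviour of {\tt bsplit} is well-founded but checks out. In short: same counting argument as the paper, plus an injectivity step the paper omits.
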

\begin{proof}
Observe that a characteristic function corresponding to a subset of $\N$ 
containing an infinite bloc of {\tt 0} or {\tt 1} digits necessarily ends
with the bloc. Therefore, by erasing the bloc we can put such functions in a bijection
with a finite subset of $\N$. Given that there are only
a countable number of finite subsets of $\N$, the 
cardinality of the set of the 
remaining subsets' characteristic functions is $2^\N$.
\end{proof}

\section{Related Work} \label{rel}
Pairing functions have been used in work on decision problems as early as
\cite{pepis,kalmar1,robinson50,robinson55,robinson68a,robinsons68b}.
There are about {\tt 19200} Google documents
referring to the original ``Cantor pairing function''
among which we mention the surprising
result that, together with the successor function
it defines a decidable subset of arithmetic
\cite{DBLP:journals/tcs/CegielskiR01}.
An extensive study of various pairing functions and their 
computational properties is presented in 
\cite{ceg99,DBLP:conf/ipps/Rosenberg02a}.
They are also
related to 2D-space filling curves
(Z-order, Gray-code and Hilbert curves) 
\cite{lawder99,lawder:2000,Lawder:2001,faloutsos:2001}.
Such curves are obtained by connecting pairs of coordinates
corresponding to successive natural numbers (obtained by applying 
unpairing operations).
They have applications to spatial and multi-dimensional database indexing
\cite{lawder99,lawder:2000,Lawder:2001,faloutsos:2001} and
symbolic arbitrary length arithmetic computations \cite{sac12}.
Note also that {\tt bpair 2} and {\tt bunpair 2} are the same as the
functions defined in \cite{pigeon} and also known as Morton-codes,
with uses in indexing of spatial databases \cite{lawder99}.

\section{Conclusion} \label{concl}

We have described mechanisms for generating 
countable and uncountable families of pairing / unpairing
bijections. The mechanism involving
$n$-adic valuations is definitely novel,
and we have high confidence (despite of their
obviousness) that the characteristic function-based
mechanisms are novel as well, at least in terms
of their connections to list, set or multiset representations
provided by the implicit use of our bijective
data transformation framework \cite{calc09fiso}.

Given the space constraints, we
have not explored the natural
extensions to more general tupling / untupling
bijections (defined between $\N^k$ and $\N$) as well
as bijections between finite lists, sets and
multisets that can be derived quite easily, using
the data transformation framework given in the
{\bf Appendix}. For the same reasons we have
not discussed specific applications of these
families of pairing functions, but we foresee
interesting connections with possible
cryptographic uses (e.g ``one time pads''
generated through intricate combinations of
members of these families). 

The ability
to associate such pairing functions
to arbitrary characteristic functions
as well as to their equivalent set, multiset, list
representations provides convenient
tools for inventing and customizing pairing / unpairing
bijections, as well as the related
tupling / untupling bijections and those defined
between natural numbers and sequences,
sets and multisets of natural numbers.

We hope that our adoption of the non-strict
functional language Haskell (freely available
from \url{haskell.org}), as a complement to conventional
mathematical notation, enables the empirically curious
reader to instantly validate our claims and encourage
her/him to independently explore their premises and
their consequences.

\bibliographystyle{plain}
\bibliography{INCLUDES/theory,go/tarau,INCLUDES/proglang,INCLUDES/biblio,INCLUDES/syn}

\section*{Appendix}

\subsection*{An Embedded Data Transformation Language}

We will describe briefly the embedded data transformation 
language used
in this paper as a set of operations on a groupoid of isomorphisms. 
We refer to (\cite{calc09fiso,arxiv:fISO}) for details.

\subsubsection*{The Groupoid of Isomorphisms}

We implement an isomorphism between two objects X and Y as a 
Haskell data type encapsulating a bijection $f$ and its inverse $g$. 


\begindc{\commdiag}[5]
\obj(14,11){$X$}
\obj(39,11){$Y$}
\mor(14,12)(39,12){$f=g^{-1}$}
\mor(39,10)(14,10){$g=f^{-1}$}
\enddc

We will call the {\em from} function the first component (a {\em
section} in category theory parlance) and
the {\em to} function the second component (a {\em retraction}) defining
the isomorphism.
The isomorphisms are naturally organized as a {\em groupoid}.
\begin{code}
data Iso a b = Iso (a->b) (b->a)

from (Iso f _) = f

to (Iso _ g) = g

compose :: Iso a b -> Iso b c -> Iso a c
compose (Iso f g) (Iso f' g') = Iso (f' . f) (g . g')

itself = Iso id id

invert (Iso f g) = Iso g f
\end{code}
Assuming that for any pair of type {\tt Iso a b},  $f \circ g = id_b$ and $g
\circ f=id_a$, we can now formulate {\em laws} about these isomorphisms.

\vskip 5mm

\noindent {\em
The data type {\tt Iso} has a groupoid structure, i.e. the {\em compose}
operation, when defined, is associative, {\em itself} acts 
as an identity element
and {\em invert} computes the inverse of an isomorphism.
}
\subsubsection*{The Hub: Sequences of Natural Numbers}
To avoid defining $\frac{n(n-1)}{2}$ isomorphisms between $n$ objects,
we choose a {\em Hub} object to/from which we will actually
implement isomorphisms.

Choosing a {\em Hub} object is somewhat arbitrary, but it makes sense to
pick a representation that is relatively easy convertible to various
others and scalable to
accommodate large objects up to the runtime system's 
actual memory limits.

We will choose as our {\tt Hub} object {\em sequences of natural
numbers}.
We will represent them as lists i.e. their Haskell type is {\tt [N]}.
\begin{code}
type N = Integer
type Hub = [N]
\end{code}
We can now define an {\tt Encoder} as an isomorphism
connecting an object to {\em Hub} 
\begin{code}
type Encoder a = Iso a Hub
\end{code}
together with the combinator ``{\tt as}'',
providing an {\em embedded transformation language} for routing
isomorphisms through two {\tt Encoders}.
\begin{code}  
as :: Encoder a -> Encoder b -> b -> a
as that this x = g x where Iso _ g = compose that (invert this)
\end{code}
The combinator ``{\tt as}'' adds a convenient syntax
such that converters between {\tt A} and {\tt B} can be designed as:
\begin{codex}
a2b x = as B A x
b2a x = as A B x
\end{codex}

\vskip 0.30cm

\begindc{\commdiag}[5]
\obj(26,0){$Hub$}
\obj(14,11){$A$}
\obj(39,11){$B$}

\mor(26,0)(39,10){$b$}
\mor(26,0)(14,10){$a^{-1}$}
\mor(39,10)(26,0){$b^{-1}$}
\mor(14,10)(26,0){$a$}
\mor(14,12)(39,12){$a2b=as~B~A$}
\mor(39,10)(14,10){$b2a=as~A~B$}
\enddc
\vskip 0.30cm
\noindent Given that {\tt [N]} has been chosen as the root, we will define our
 sequence data type {\em list} simply as the identity isomorphism 
on sequences in {\tt [N]}.
\begin{code}  
list :: Encoder [N]
list = itself
\end{code} 
The {\tt Encoder} {\tt mset} for multisets of natural numbers is defined as:
\begin{code}
mset :: Encoder [N]
mset = Iso mset2list list2mset

mset2list, list2mset :: [N]->[N]
mset2list xs = zipWith (-) (xs) (0:xs)
list2mset ns = tail (scanl (+) 0 ns) 
\end{code}
The {\tt Encoder} {\tt set} for sets of natural numbers is defined as:
\begin{code}
set :: Encoder [N]
set = Iso set2list list2set

set2list, list2set :: [N]->[N]
list2set = (map pred) . list2mset . (map succ)
set2list = (map pred) . mset2list . (map succ) 
\end{code}
Note that these converters between lists, multisets and sets make no assumption about
finiteness of their arguments and therefore they
can used in a non-strict language like Haskell on infinite objects as well.

\subsection*{Examples of pairing functions derived from characteristic functions}

The function {\tt syracuse} is used in an equivalent formulation of the Collatz conjecture. Interestingly, it can be computed using the {\tt nAdicTail} which results after
dividing a number $n$ with $\mu_2(n)$. Note that we derive our pairing function directly
from the {\tt list} representation of the range of this function as {\tt genericPair} and 
{\tt genericUnpair} implicitly construct the associated characteristic function.
\begin{code}
syracuse :: N->N
syracuse n = nAdicTail 2 (6*n+4)

nsyr 0 = [0]
nsyr n = n : nsyr (syracuse n)
\end{code}

\begin{code}
syrnats = map syracuse [0..]

syrpair = genericPair list syrnats
syrunpair = genericUnpair list syrnats 
\end{code}

Figures \ref{bunpair2} and \ref{bunpair3} 
show the ``{\em Z-order}'' (Morton code) 
path connecting successive values 
in the range of the function {\tt bunpair 2} and {\tt bunpair 3}. 
Figures \ref{syrUnpair} and
\ref{piUnpair} show the path connecting the values in the range of unpairing functions
associated, respectively to the Syracuse function and the binary digits of $\pi$.
Interestingly, at a first glance, some regular patterns emerge even in the case of
such notoriously irregular characteristic functions.
\FIG{bunpair2}{Path connecting values of {\tt bunpair 2}}{0.30}{bunpair2}
\FIG{bunpair3}{Path connecting values of bunpair 3}{0.30}{bunpair3}
\FIG{syrUnpair}{Path connecting values of an unpairing bijection based on the Syracuse function}{0.30}{syrUnpair}
\FIG{piUnpair}{Path connecting values of an unpairing bijection based on binary digits of $\pi$}{0.30}{piUnpair}



\begin{codeh}
sqpair = genericPair set (map (^2) [0..])
squnpair = genericUnpair set (map (^2) [0..]) 
\end{codeh}

\begin{codeh}
npair = genericPair list [0..]
nunpair = genericUnpair list [0..] 
\end{codeh}

\begin{codeh}
bnats = concatMap (as bins nat) [0..]

bnatpair = genericPair bins bnats
bnatunpair = genericUnpair bins bnats 
\end{codeh}

\begin{codeh}
powunpair = genericUnpair set (map (2^) [0..])
powpair = genericPair set (map (2^) [0..])
\end{codeh}

\begin{codeh}
{-
-- add "import Pi" to the top of this file
-- using PI as a source of a bitstream
bin_pi = as bins nat (machin_pi (2^12))

pi_pair = genericPair bins bin_pi
pi_unpair = genericUnpair bins bin_pi 
-}
\end{codeh}


\begin{codeh}
infs u n = (bsize n) - s where
  (a,b) = u n
  s = (bsize a)+(bsize b)

bsize 0 = 0
bsize n | n>0 = 1 + (bsize (drop2digit n))  where
  drop2digit n = (shiftR n 1)+(1 .&. n)-1

xunp u x n = u (n `xor` x)

xp p x (a,b) = (p (a,b)) `xor` x

xtest p u (x,n) = xp p x (xunp u x n)

xplot = pplot (xunp (bunpair2) 1) 7  

xtest1 = map (xunp (bunpair2) 7) [0..15]
\end{codeh}

\end{document}